\newcommand{\es}{\varnothing}
\DeclareMathOperator{\NP}{\mathsf{NP}}
\begin{document}

\title{{\sc Convex Independence in Permutation Graphs}}

\author{
Wing-Kai~Hon\inst{1}
\and 
Ton~Kloks\inst{}
\and 
Fu-Hong~Liu\inst{1}
\and 
Hsiang-Hsuan~Liu\inst{1,2}
}

\institute{National Tsing Hua University, Hsinchu, Taiwan\\
{\tt (wkhon,fhliu,hhliu)@cs.nthu.edu.tw}
\and
University of Liverpool, Liverpool, United Kingdom\\
{\tt hhliu@liverpool.ac.uk}
}

\maketitle 

\begin{abstract}
A set $C$ of vertices of a graph is $P_3$-convex if every vertex outside  
$C$ has at most one neighbor in $C$. The convex hull $\sigma(A)$ of a set $A$ 
is the smallest $P_3$-convex set that contains $A$. A set $M$ is convexly 
independent if for every vertex $x \in M$, $x \notin \sigma(M-x)$. 
We show that the maximal number of vertices that a convexly independent set 
in a permutation graph can have, can be computed in polynomial time. 
\end{abstract}

\section{Introduction}

Popular models for the spread of disease and of opinion are  
graph convexities. The $P_3$-convexity is one such convexity, and it is 
defined as follows. 

\begin{definition}
A set $S$ of vertices in a graph $G$ is $P_3$-convex if every vertex 
outside $S$ has at most one neighbor in $S$. 
\end{definition}
The $P_3$-convexity will be the only convexity studied in this paper, 
so from now on we use the term convex, instead of $P_3$-convex. 
For a set $A$ of vertices we let $\sigma(A)$ denote its \underline{convex hull}, 
that is, the smallest convex set that contains $A$.\footnote{In his classic paper, 
Duchet defines a 
graph convexity as a collection of `convex' subsets of a (finite) set 
$V$ that contains $\es$ and 
$V$, and that is closed under intersections, and that, furthermore, has the property 
that each convex subset induces a connected subgraph. This last 
condition is, here, omitted.} 

\bigskip 

For a set of points $A$ in $\mathbb{R}^d$ and a point $x$ in its Euclidean 
convex hull, 
there exists a set $F \subseteq A$ of at most $d+1$ points such that $x \in \sigma(F)$, 
ie, $x$ is in 
the Euclidean convex hull of $F$. This is Carath\'eodory's theorem. For convexities in graphs 
one defines the Carath\'eodory number as the smallest number $k$ such that, 
for any set $A$ of vertices, and any vertex $x \in \sigma(A)$, there exists a 
set $F \subseteq A$ with $|F| \leq k$ and $x \in \sigma(F)$. 
For a set $S$, let 
\begin{equation}
\partial(S)= \sigma(S) \setminus \bigcup_{x \in S} \; \sigma(S-x).
\end{equation}
A set is irredundant if $\partial(S) \neq \es$.  Duchet  
showed that the Carath\'eodory 
number is the maximal cardinality of an irredundant set. 

\bigskip 

\begin{definition}
A set $S$ is convexly independent if 
\begin{equation}
\text{\rm for all $x \in S$,} \quad x \notin \sigma(S-x).
\end{equation}
\end{definition}
Notice that, if a set is convexly independent then so is every subset of it 
(since $\sigma$ is a closure operator). 

It appears that there is no universal notation for the maximal 
cardinality of a convexly independent set.\footnote{Ramos et al call it the `rank' 
of the graph, 
but this word has been used for so many different concepts 
that it has lost all meaning.} 
In this paper we denote it 
by $\beta_c(G)$. 
Every irredundant set is convexly independent, thus 
the convex-independence number 
$\beta_c(G)$ is an upperbound for the Carath\'eodory number. 
For example, for paths $P_n$ with $n$ vertices, and for 
cycles $C_n$ with $n$ vertices, we have equality; 
\begin{equation}
\label{eqnpath}
\beta_c(P_n)= 2 \cdot \left\lfloor \frac{n}{3} \right\rfloor + (n \bmod 3) 
\quad\text{and}\quad \beta_c(C_n)=\beta_c(P_{n-1}).
\end{equation}
Other examples, for which the Carath\'eodory number 
equals the convex independence number, 
are \underline{leafy trees}, which are trees with at most one vertex of degree two. 
It is easy to check, that 
\begin{equation}
\label{eqntree}
\text{$T$ is a leafy tree}\quad \Rightarrow\quad
\beta_c(T)=\text{the number of leaves in $T$.}
\end{equation}
Examples for which the Carath\'eodory number is strictly less  
than the convex-independence number are disconnected graphs. 
If $S$ is an irredundant set 
then $\sigma(S)$ is necessarily connected. However, 
$\beta_c(G)$ is the sum of the convexly independence 
numbers of $G$'s components. Notice also that $\beta_c(P_6)=4$, but 
there exists a maximum convexly independent set $S$ 
for which $\sigma(S)=S$ and is disconnected, and, thence, redundant. 

\bigskip 

A set $S$ is a 2-packing if it is an independent set in $G^2$, that is, 
no two vertices of $S$ are adjacent or have a common neighbor. 
Every 2-packing $S$  
is convexly independent, as $\sigma(S)=S$. 
For splitgraphs with minimal degree at least two, 
a maximal convex-independent set is a 2-packing, unless 
it has only two vertices. It follows that computing 
the convexly 
independence number 
is $\NP$-complete for splitgraphs (it is
Karp's {\sc set packing}, problem~4).
For biconnected chordal 
graphs (including the splitgraphs mentioned above), 
every vertex is in the convex hull of any set of 
two vertices at distance at most two. Thus, the Carath\'eodory number for those is 
two. 

\bigskip 

Ramos et al show that computing the convexly independence number remains 
$\NP$-complete for bipartite graphs, and they show that it is polynomial 
for trees and for threshold graphs. 

\bigskip 

The intersection graph of a collection of straight line segments, with endpoints 
on two parallel (horizontal) lines, is called a permutation graph. 
Dushnik and Miller 
characterize them as the comparability 
graphs for which the complement is a comparability graph as well.  
In this paper we show that the convexly independence number of permutation 
graphs is computable in polynomial time. 

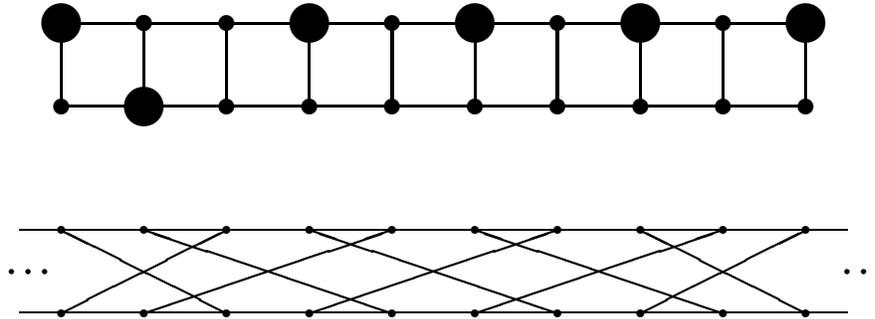
\begin{figure}
\begin{center}
\setlength{\unitlength}{1.1mm}
\thicklines
\begin{picture}(110,50)
\put(10,35){\circle*{2.0}}
\put(20,45){\circle*{2.0}}
\put(20,35){\circle*{2.0}}
\put(20,45){\circle*{2.0}}
\put(30,35){\circle*{2.0}}
\put(30,45){\circle*{2.0}}
\put(40,35){\circle*{2.0}}
\put(40,45){\circle*{2.0}}
\put(50,35){\circle*{2.0}}
\put(50,45){\circle*{2.0}}
\put(60,35){\circle*{2.0}}
\put(60,45){\circle*{2.0}}
\put(70,35){\circle*{2.0}}
\put(70,45){\circle*{2.0}}
\put(80,35){\circle*{2.0}}
\put(80,45){\circle*{2.0}}
\put(90,35){\circle*{2.0}}
\put(90,45){\circle*{2.0}}
\put(100,35){\circle*{2.0}}
\put(100,45){\circle*{2.0}}
\put(10,35){\line(1,0){90}}
\put(10,45){\line(1,0){90}}
\put(10,35){\line(0,1){10}}
\put(20,35){\line(0,1){10}}
\put(30,35){\line(0,1){10}}
\put(40,35){\line(0,1){10}}
\put(50,35){\line(0,1){10}}
\put(60,35){\line(0,1){10}}
\put(70,35){\line(0,1){10}}
\put(80,35){\line(0,1){10}}
\put(90,35){\line(0,1){10}}
\put(100,35){\line(0,1){10}}
\put(20,35){\circle*{5.0}}
\put(10,45){\circle*{5.0}}
\put(40,45){\circle*{5.0}}
\put(60,45){\circle*{5.0}}
\put(80,45){\circle*{5.0}}
\put(100,45){\circle*{5.0}}

\thinlines
\put(5,10){\line(1,0){100}}
\put(5,20){\line(1,0){100}}
\thicklines
\put(10,10){\line(2,1){20}}
\put(20,10){\line(3,1){30}}
\put(40,10){\line(3,1){30}}
\put(60,10){\line(3,1){30}}
\put(80,10){\line(2,1){20}}
\put(30,10){\line(-2,1){20}}
\put(50,10){\line(-3,1){30}}
\put(70,10){\line(-3,1){30}}
\put(90,10){\line(-3,1){30}}
\put(100,10){\line(-2,1){20}}
\put(10,10){\circle*{1}}
\put(10,20){\circle*{1}}
\put(20,10){\circle*{1}}
\put(30,10){\circle*{1}}
\put(40,10){\circle*{1}}
\put(50,10){\circle*{1}}
\put(60,10){\circle*{1}}
\put(70,10){\circle*{1}}
\put(80,10){\circle*{1}}
\put(90,10){\circle*{1}}
\put(100,10){\circle*{1}}
\put(20,20){\circle*{1}}
\put(30,20){\circle*{1}}
\put(40,20){\circle*{1}}
\put(50,20){\circle*{1}}
\put(60,20){\circle*{1}}
\put(70,20){\circle*{1}}
\put(80,20){\circle*{1}}
\put(90,20){\circle*{1}}
\put(100,20){\circle*{1}}

\multiput(8,15)(-2,0){3}{\circle*{.6}}
\multiput(105,15)(2,0){3}{\circle*{.6}}

\end{picture}
\end{center}
\caption{The heavy dots specify an irredundant set $S$ with $\sigma(S)=V$. The second 
figure shows the intersection model, ie, the `permutation diagram,' for ladders. 
This example shows that the Carath\'eodory number for biconnected 
permutation graphs is unbounded.} 
\end{figure}

\bigskip 

This seems a good time to do a warm-up; 
let's have a close look at convex independence in cographs. 

\section{Convex independence in cographs}

\begin{definition}
A graph is a cograph if it has no induced $P_4$, the path with 4 vertices. 
\end{definition}

Cographs are characterized by the property that every induced subgraph 
is disconnected or else, its complement is disconnected. In other words, cographs 
allow a complete decomposition by joins and unions. 
It follows that cographs are permutation 
graphs, as also this class is closed under joins and unions. 

\bigskip 

Ramos et al analyze the 
convex-independence number for 
threshold graphs. Threshold graphs are the graphs without induced $P_4$, $C_4$ and $2K_2$, 
hence, threshold graphs are properly contained in the class of cographs. In the following 
theorem we extend their results. 

\begin{theorem}
There exists a linear-time algorithm to compute the convex-independence 
number of cographs.
\end{theorem}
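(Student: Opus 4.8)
The plan is to read off $\beta_c$ from the cotree of the cograph by a bottom-up dynamic program. Recall that a cograph is built from single vertices by iterated disjoint unions ($\cup$) and joins ($+$), and that the cotree recording this can be constructed in linear time; along every root-to-leaf path the internal nodes alternate between $\cup$-nodes and $+$-nodes, and each child of a $+$-node is a leaf or a $\cup$-node (so it is $K_1$ or a disconnected graph). A $\cup$-node is immediate: as noted in the introduction, $\beta_c$ of a graph is the sum of the $\beta_c$-values of its connected components, so if $t$ is a $\cup$-node with children $t_1,\dots,t_m$ then $\beta_c(G_t)=\sum_i\beta_c(G_{t_i})$. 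Hence the whole problem reduces to understanding $+$-nodes.

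So let $t$ be a $+$-node, $G_t=G_1+\dots+G_k$ with $k\ge 2$, and let $M$ be convexly independent with $M_i=M\cap V(G_i)$. The engine of the argument is the observation that \emph{at most one $M_i$ has $|M_i|\ge 2$}: if $x\in M_i$ while some $M_j$, $j\ne i$, contains two vertices, then $x$ is adjacent to both of them (distinct parts of a join are completely joined), so $x$ has two neighbours in $M-x$ and hence $x\in\sigma(M-x)$. Feeding this back in, one checks that $|M|\ge 3$ forces $G_t$ to have a universal vertex: if all $M_i$ are singletons and $|M|\ge 3$, then each $x\in M$ sees the remaining $\ge 2$ vertices of $M$ and so lies in $\sigma(M-x)$; and if exactly one part $G_1$ carries $|M_1|\ge 2$ while $G_t$ has no universal vertex, then deleting a suitable vertex of $M$ leaves a set $S$ with $|S\cap V(G_1)|\ge 2$, whose hull first absorbs all of $V(G_t)\setminus V(G_1)$ (each such vertex sees at least two vertices of $S\cap V(G_1)$) and then --- since the absence of a universal vertex forces $|V(G_t)\setminus V(G_1)|\ge 2$ --- all of $V(G_1)$, reinserting the deleted vertex. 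What survives this case analysis is the dichotomy
\[
\beta_c(G_t)=
\begin{cases}
\max\bigl(2,\;c(G_t-z)\bigr) & \text{if $G_t$ has a universal vertex $z$,}\\[2pt]
2 & \text{otherwise,}
\end{cases}
\]
where $c(\cdot)$ is the number of connected components; in the first line the lower bound is witnessed by a transversal of the components of $G_t-z$, and the upper bound follows by treating the cases $z\in M$ (then $M-z$ has $\ge 2$ vertices, which reabsorb $z$, so $|M|\le 2$) and $M\subseteq V(G_t-z)$ (then $M$ has at most one vertex per component of $G_t-z$) separately. In cotree terms $G_t$ has a universal vertex exactly when some child is a leaf, and then $c(G_t-z)=1$ if $k\ge 3$, while for $k=2$ it equals the number of connected components of the other child.

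Assembling the algorithm is now routine. Build the cotree in linear time and, visiting nodes in postorder, keep at each node three items: whether the node is a leaf, the number $c$ of connected components of the graph it represents ($c=1$ at a leaf or a $+$-node, and $c=\sum_i c(t_i)$ at a $\cup$-node), and the value $\beta_c$ (equal to $1$ at a leaf, to $\sum_i\beta_c(t_i)$ at a $\cup$-node, and given by the displayed dichotomy at a $+$-node, with universality detected from the leaf flags of the children). Return $\beta_c$ at the root. Each node is handled in $O(1)$ time beyond one pass over its children, so the running time is linear in the cotree and hence in $G$. I expect the genuinely laborious part to be the verification of the $+$-node dichotomy, and within it the subcase in which one part holds two vertices of $M$ and a second part holds one more: one must rule out a convexly independent set of size three of this shape whenever $G_t$ has no universal vertex, which is exactly where the ``a join absorbs everything'' phenomenon does the work.
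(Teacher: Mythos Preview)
Your approach is the paper's: cotree decomposition, unions add, and at a join either $|M|\le 2$ or $G_t$ has a universal vertex $z$ and $M$ is a transversal of the components of $G_t-z$. Your displayed dichotomy is exactly what the paper derives (it leaves the formula implicit), and you check the transversal lower bound more carefully than the paper does.

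The one step that needs more than you give it is the assertion ``$M\subseteq V(G_t-z)$ implies at most one vertex of $M$ per component of $G_t-z$,'' and this is \emph{not} the subcase you flag as laborious. The configuration you single out---two vertices of $M$ in one join-part and one in another, with no universal vertex---is vacuous: the lone vertex is adjacent to the other two and hence already in $\sigma(M-x)$. The genuinely non-automatic step is the per-component bound in the \emph{presence} of a universal vertex, and your absorption argument does not cover it: when $k=2$ and $G_2=\{z\}$, the set $V(G_t)\setminus V(G_1)=\{z\}$ is a singleton, so it cannot reabsorb $V(G_1)$. Here one must descend one more level in the cotree, as the paper does. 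Suppose $p,q\in M\cap C$ for a component $C$ of $G_1$ and $|M|\ge 3$; pick any $r\in M\setminus\{p,q\}$, so $z\in\sigma(\{p,r\})\subseteq\sigma(M-q)$. Since $C$ is a connected cograph on $\ge 2$ vertices, $C$ is itself a join $A+B$. If $p,q$ lie on the same side, say $A$, then every vertex of $B$ is adjacent to both $p$ and $z$, whence $B\subseteq\sigma(M-q)$, and $q\in A$ sees all of $B\cup\{z\}$; if $p\in A$ and $q\in B$, then $q$ is adjacent to $p$ and $z$ directly. Either way $q\in\sigma(M-q)$, and this inner-join step is what closes the upper bound.
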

\begin{proof}
Let $G$ be a cograph. First assume that $G$ is a union of two smaller 
cographs, $G_1$ and 
$G_2$. In that case, the convex-independence number of $G$ is the sum of 
$\beta_c(G_1)$ and $\beta_c(G_2)$, that is, 
\begin{equation}
G=G_1 \oplus G_2 \quad\Rightarrow \quad 
\beta_c(G)=\beta_c(G_1) + \beta_c(G_2).
\end{equation}

\medskip 

\noindent 
Now, assume that $G$ is a join of two smaller cographs $G_1$ and $G_2$. 
In that case, every vertex of $G_1$ is adjacent to every vertex of $G_2$. 
Let $S$ be a convex-independent set. If $S$ has at least one vertex in 
$G_1$ and at least one vertex in $G_2$, then $|S|=2$, since $G[S]$ cannot 
have an induced $P_3$ or $K_3$. 

\medskip 

\noindent 
Consider a convex-independent set $S \subseteq V(G_1)$.  
Assume that $|S|>1$ and that $|V(G_2)| \geq 2$. Then,
any two vertices  
of $S \cap V(G_1)$ generate $V(G_2) \subseteq \sigma(S)$, and, in turn, 
$V(G)$ is in their convex hull. 
This implies that $S$ 
cannot have any other vertices, that is, 
\[|V(G_2)| \geq 2 \quad\Rightarrow\quad |S| \leq 2.\] 

\medskip 

\noindent
Next, assume 
\[\boxed{S \subseteq V(G_1) \quad\text{and}\quad |V(G_2)|=1.}\] 
Say $u$ is in the singleton $V(G_2)$, that is, $u$ is a universal vertex.  
Let $C_1,\dots,C_t$ be the 
components of $G_1$.  We claim that
\[ |S \cap C_i| \leq \min\;\{\;2,\;|C_i|\;\}.\] 
To see that, assume that $|C_i| \geq 2$. Then, since $G[C_i]$ is a 
connected cograph, $G[C_i]$ is the 
join of two cographs, say with vertex sets $A$ and $B$. 
If $S$ has three vertices in $A$, then each of them is in the 
convex hull of the other two, since $B \cup \{u\}$ is contained 
in their common 
neighborhood, and this set contains at least two vertices. 

\medskip 

\noindent
Assume $S$ has vertices in at least two 
different components of $G_1$.    
Assume furthermore that one component $C_i$ has at least two 
vertices of $S$, say $p$ and $q$. Let $\zeta$ be a vertex of $S$ in 
another component. Then $u \in \sigma(\{p,\zeta\})$, because $[p,u,\zeta]$ is 
an induced $P_3$. 

\medskip 

\noindent
The induced subgraph $G[C_i]$ is a join of two smaller cographs, 
say with vertex sets $A$ and $B$.  
If $p$ and $q$ are both in $A$, 
then $q \in \sigma(S-q)$, since $p$ and $u$ generate $B \subset \sigma(S)$, 
and $B \cup \{u\}$  contains two neighbors of $q$. 
If $p \in A$ and $q \in B$, then $q$ has two neighbors in $\sigma(S-q)$, 
namely $p$ and $u$. Thus, again, $q \in \sigma(S-q)$. 

\medskip 

\noindent
In fine, either each component of $G_1$ contains one vertex of $S$, or else
$|S| \leq 2$. 

\medskip 

\noindent 
This proves the theorem. 
\qed\end{proof}

\section{Monadic second-order logic}

In this section we show that the maximal cardinality of a 
convex-independent set is computable in linear time for graphs 
of bounded treewidth or rankwidth. 
To do that, we show that there is a formulation of the problem 
in monadic second-order logic. The claim then follows from Courcelle's 
theorem. 

\bigskip 

By definition, a set of vertices $W \subseteq V$ is convex if 
\begin{equation}
\label{eq1}
\forall_{x \in V} \; x \notin W \quad\Rightarrow\quad |N(x) \cap W| \leq 1.
\end{equation}

\bigskip 

Let $S \subseteq V$. 
To formulate that a set $W = \sigma(S)$ we formulate that 
\begin{enumerate}[\rm 1.]
\item $S \subseteq W$, and 
\item $W$ satisfies~\eqref{eq1}, and 
\item For all $W^{\prime}$ for which the previous two conditions hold, 
$W \subseteq W^{\prime}$.
\end{enumerate}

\bigskip 

Finally, a set $S$ is convexly independent if 
\begin{equation}
\label{eq2}
\forall_{x \in V} \; x \in S \quad\Rightarrow \quad x \notin \sigma(S-x).
\end{equation}
Actually, 
to show that $x \notin \sigma(S-x)$ it is sufficient to formulate that 
(for every vertex $x \in S$) there 
is a set $W_x$ such that 
\begin{equation}
\label{eq3}
\text{$W_x$ is convex}\quad\text{and}\quad S\setminus \{x\} \subseteq W_x 
\quad\text{and}\quad x \notin W_x.
\end{equation}

\bigskip 

The formulas \eqref{eq1}---\eqref{eq3} 
show that convex independence can be formulated in 
monadic second-order logic (without quantification over subsets of edges). 
By Courcelle's theorem we obtain the following. 
\begin{theorem}
\label{thm bounded tw}
There exists a linear-time algorithm to compute the convex-independence number 
for graphs of bounded treewidth or rankwidth. 
\end{theorem}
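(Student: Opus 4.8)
The plan is to make precise the claim, already sketched in the paragraph preceding the theorem, that convex independence is expressible in monadic second-order logic, and then to feed this into the \emph{optimization} version of Courcelle's theorem. First I would fix, once and for all, a monadic second-order formula $\varphi(S)$ with a single free vertex-set variable $S$ that holds in a graph $G$ exactly when $S$ is a convexly independent set. This is assembled from \eqref{eq1}--\eqref{eq3}: a subformula $\mathrm{conv}(W)$ says, following \eqref{eq1}, that no vertex outside $W$ has two neighbours inside $W$; and then, following \eqref{eq3}, $\varphi(S)$ says that for every $x\in S$ there exists a set $W_x$ with $\mathrm{conv}(W_x)$, with $S\setminus\{x\}\subseteq W_x$, and with $x\notin W_x$. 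The point I would take care to record explicitly is that this formula lies in $\mathrm{MSO}_1$, i.e.\ it quantifies only over vertices and over sets of vertices, never over edges or sets of edges --- which is clear from the shape of \eqref{eq1}--\eqref{eq3}, and which is exactly the property needed to go beyond treewidth.

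Second, I would invoke the extremal strengthening of Courcelle's theorem, rather than its bare decision form: for a fixed $\mathrm{MSO}_1$ formula $\psi(X)$ with a free vertex-set variable and a fixed integer $k$, the quantity $\max\{|X| : G\models\psi(X)\}$ can be computed in time linear in $|V(G)|$ once $G$ is given together with a tree-decomposition of width at most $k$; this is due to Arnborg, Lagergren and Seese. Since $\varphi$ is in $\mathrm{MSO}_1$, the corresponding statement for clique-width holds as well, by the $\mathrm{LinEMSOL}$ metatheorem of Courcelle, Makowsky and Rotics, and hence also for rankwidth, because rankwidth and clique-width are bounded on exactly the same graph classes (Oum and Seymour). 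Applying this with $\psi=\varphi$ gives $\beta_c(G)=\max\{|S| : G\models\varphi(S)\}$ within the stated bounds, provided a decomposition of bounded width is available; the requisite decompositions are furnished by Bodlaender's linear-time algorithm in the treewidth case and by the polynomial-time algorithm of Hlin\v{e}n\'y and Oum in the rankwidth case, after which the dynamic programming along the decomposition is linear.

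The step I expect to demand the most care is not any single calculation but the honest verification that the translation of \eqref{eq1}--\eqref{eq3} into one fixed formula is faithful and legitimate: in particular, that the existential set-quantifier ``$\exists W_x$'' may sit inside the universal vertex-quantifier ``$\forall x\in S$'' --- it may, precisely because $x$ ranges over vertices, so the result is a single finite $\mathrm{MSO}_1$ formula, not an infinite conjunction indexed by the elements of $S$ --- and that the minimality clause (item~3 in the description of the hull), which would otherwise force a further nested set quantifier, is genuinely unnecessary for expressing convex independence, exactly as observed around \eqref{eq3}. Once these points are nailed down, the theorem is immediate from Courcelle's theorem and its extensions. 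One caveat worth stating: ``linear time'' here conceals a constant depending on $\varphi$ and, unavoidably, on $k$; since both are fixed for the class under consideration, the dependence on $|V(G)|$ is unaffected.
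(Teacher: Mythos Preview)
Your proposal is correct and follows essentially the same approach as the paper: express convex independence in $\mathrm{MSO}_1$ via formulas~\eqref{eq1}--\eqref{eq3} and then invoke Courcelle's theorem. You are considerably more careful than the paper about naming the specific optimization variants (Arnborg--Lagergren--Seese; Courcelle--Makowsky--Rotics) and the decomposition algorithms required, but the underlying argument is identical.
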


\begin{remark}
Notice that also the Carath\'eodory number is expressible in monadic 
second-order logic.
\end{remark}

\subsection{Trees}

Let $T$ be a tree with $n$ vertices and maximal degree $\Delta$. Ramos et al 
present an involved 
algorithm, that runs in $O(n \log \Delta)$ time, to compute a 
convexly independent set. By Theorem~\ref{thm bounded tw}, 
there exists a linear-time algorithm 
that accomplishes this.
We propose a different algorithm. 

\begin{theorem}
There exists a linear-time algorithm that computes the 
convexly independence number of trees. 
\end{theorem}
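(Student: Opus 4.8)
The plan is to give a direct combinatorial characterization of convexly independent sets in a tree $T$ and then compute the maximum one by a bottom-up dynamic program on a rooted version of $T$. First I would unpack what convex independence means for a tree. Since the convex hull operation repeatedly absorbs a vertex $x$ whenever $x$ has two neighbors already in the set, a set $S$ fails to be convexly independent exactly when, for some $x\in S$, the set $S-x$ grows (under repeated absorption) to reach $x$. In a tree this propagation is easy to track: $\sigma(S-x)$ contains $x$ iff there are two neighbors $u,v$ of $x$ such that each of the two subtrees hanging off $x$ through $u$ and through $v$ already forces $u$ and $v$ respectively into the hull. So the first step is a clean local lemma: $S$ is convexly independent iff no vertex $x$ (whether or not $x\in S$, but it only matters when $x\in S$) is ``pinched'' from two sides. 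I expect this to reduce to a statement about how many of the branches at each vertex are ``active'', where a branch is active if the part of $S$ inside it pushes the hull up to that vertex.

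Next I would root $T$ at an arbitrary leaf and define, for each vertex $v$, a small constant-size state describing the interaction between an optimal partial solution in the subtree $T_v$ and the rest of the tree. The natural parameters are: (i) whether $v\in S$; (ii) whether the hull of the chosen set inside $T_v$ already contains $v$ (equivalently, whether $v$'s branch, seen from its parent, is active); and (iii) whether $v$ is already ``pinched'' from below, i.e. two child-branches are active, which forbids $v$ from later receiving an active contribution from its parent side if $v\in S$. Possibly one more bit is needed to record whether $v$ is one absorption step away from being pinched (exactly one active child branch). These states are $O(1)$ many, and the transition at $v$ combines the states of its children: we sum the sizes, count how many children send an ``active'' branch into $v$, and check the pinching constraints against the choice ``$v\in S$ or not''. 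The recurrence picks, for each target state of $v$, the maximum total size over all admissible combinations of children states; because the constraints only care about the count of active children being $0$, $1$, or $\ge 2$, the per-vertex work is linear in the number of children, giving $O(n)$ overall.

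The main obstacle I anticipate is getting the propagation semantics exactly right at a vertex with several children: the hull closure is a global fixed point, and I must be sure that treating each child branch independently and then merging is sound — in particular that activation cannot ``bounce'' back down into an already-processed subtree and change its status. For a tree this should hold because removing $v$ disconnects the child subtrees, so once we know which branches are active at $v$, the only new absorptions happening ``above'' cannot re-enter a child subtree except through $v$ itself; I would state and prove this non-interference as a lemma before writing the recurrence. A secondary point is the treatment of $v$ itself when $v\in S$: I must verify that $v$ being pushed into $\sigma(S-v)$ is exactly the event ``at least two of $v$'s branches (child branches plus the parent branch) are active after closure'', and handle the boundary case at the root (no parent branch) and at leaves (no child branches, trivially convexly independent).

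Finally I would check correctness by induction on the tree and confirm the linear running time. As a sanity check, the algorithm must reproduce \eqref{eqntree} for leafy trees — every leaf can be taken, degree-two vertices contribute nothing extra — and the values in \eqref{eqnpath} for paths, which are the special case $\Delta=2$; this also matches the $O(n\log\Delta)$ bound of Ramos et al., now improved to $O(n)$, consistent with Theorem~\ref{thm bounded tw}.
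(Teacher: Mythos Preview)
Your approach differs genuinely from the paper's. The paper does not use dynamic programming at all: it decomposes $T$ into a minimal collection of vertex-disjoint maximal leafy subtrees $F_1,\dots,F_s$ together with the paths that connect them, and then simply adds up the known values --- the leaves of each $F_i$ via~\eqref{eqntree} and the path formula~\eqref{eqnpath} for each connecting path, noting that every path admits an optimum containing both endpoints. That argument is shorter and leans directly on the two formulas already established in the introduction; your rooted DP is more generic and would extend more readily to, say, a weighted variant where the decomposition trick does not apply.

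One point of your plan is not quite right and would bite you when you try to write the recurrence. The non-interference you hope to prove fails in the direction that matters: absorptions \emph{do} re-enter a child subtree through $v$, and they can cascade arbitrarily far down. Concretely, suppose $v\notin S$ has exactly one active child and the parent side of $v$ then activates; $v$ is absorbed, which activates the parent side of every previously inactive child $c$ of $v$, and if such a $c$ again has exactly one active child the cascade continues into $T_c$, and so on, until it may reach some $x\in S$ deep below and pinch it. Your bits (iii)--(iv) record only whether $v$ itself is at risk, not whether such a vulnerable descendant exists. The clean remedy is to make ``is the parent branch of $v$ active?'' an \emph{input parameter} of the state rather than something to be summarized after the fact: for each vertex $v$ and each $P_v\in\{0,1\}$, store the maximum $|S\cap T_v|$ for which every $x\in S\cap T_v$ is safe, together with the resulting activation bit $A_v$. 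At $v$ one then computes each child's parameter $P_c$ from $P_v$, the choice $[v\in S]$, and the siblings' activation bits; since $P_c$ depends only on whether the count of other active branches is $0$, $1$, or $\ge 2$, the transition still costs $O(\deg v)$ and the whole algorithm runs in $O(n)$.
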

\begin{proof}
Let $T$ be a tree. Decompose $T$ into a minimal number of 
maximal, vertex-disjoint, leafy trees, $F_1, \dots, F_s$, 
and a collection of paths.  The endpoints of the paths are 
separate leaves of the trees $F_i$ or pendant vertices. 
By Equation~\eqref{eqntree}, 
each leafy tree $F_i$ has a maximum convexly independent set 
consisting of its leaves.  The convexly independence numbers  
of the connecting paths are given by Equation~\eqref{eqnpath}. 
Notice that each path has a maximum, convexly independent set 
that contains the two endpoints. 
\qed\end{proof}

\section{The convex-independence number of permutation graphs}

In this section we show that there exists a polynomial-time algorithm 
to compute the convex-independence number of permutation graphs. 

\bigskip 

In the following discussion, let $G$ be a permutation graph with a 
fixed permutation diagram. 
We refer to $S$ as a generic convex-independent set in $G$. 

\begin{definition}
Let $S$ be a convex-independent set in $G$. Let $x \in V \setminus S$. 
A \underline{2-path} connecting $x$ to $S$ is a sequence of vertices 
\begin{equation}
\Delta=[s_1,s_2,x_1,\dots,x]
\end{equation}
in which every vertex has two neighbors that appear earlier in the sequence, 
or else it is in $S$. 
\end{definition}

\begin{lemma}
\begin{equation}
x \in \sigma(S) \quad \Leftrightarrow \quad 
\text{\rm there is a 2-path connecting $x$ to $S$.}
\end{equation}
\end{lemma}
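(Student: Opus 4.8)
The plan is to prove the two implications separately. The direction ``$\Leftarrow$'' is the easy one: suppose there is a 2-path $\Delta=[s_1,s_2,x_1,\dots,x]$. I would argue by induction along the sequence that every vertex of $\Delta$ lies in $\sigma(S)$. The first two vertices $s_1,s_2$ are in $S\subseteq\sigma(S)$. For any later vertex $y$ in the sequence, either $y\in S$ (hence $y\in\sigma(S)$ trivially), or $y$ has two neighbors $u,v$ appearing earlier in the sequence; by the induction hypothesis $u,v\in\sigma(S)$, so $y$ has at least two neighbors in the convex set $\sigma(S)$, and since a convex set cannot have an outside vertex with two neighbors inside, $y\in\sigma(S)$. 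In particular $x\in\sigma(S)$. Note this direction uses nothing about permutation graphs.

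For the direction ``$\Rightarrow$'', I would give a characterization of $\sigma(S)$ via an explicit closure process. Define $S=S_0\subseteq S_1\subseteq S_2\subseteq\cdots$ by repeatedly adding to $S_i$ every vertex outside $S_i$ that has at least two neighbors in $S_i$; let $S^\ast=\bigcup_i S_i$ be the fixed point. The standard argument shows $S^\ast$ is convex (no outside vertex has two neighbors inside, by construction of the fixed point), that $S\subseteq S^\ast$, and that $S^\ast$ is contained in every convex set containing $S$ (an easy induction on $i$: if $W\supseteq S_i$ is convex then any vertex with two neighbors in $S_i\subseteq W$ must lie in $W$, so $W\supseteq S_{i+1}$). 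Hence $S^\ast=\sigma(S)$. Now if $x\in\sigma(S)=S^\ast$, pick the least $i$ with $x\in S_i$. If $i=0$ then $x\in S$ is not the case we need (but then $x\notin V\setminus S$; one should read the lemma as stated for $x\in V\setminus S$, or simply observe the trivial 2-path $[x]$ is allowed since $x\in S$). Otherwise $x$ has two neighbors $u,v\in S_{i-1}$; recursively each of $u,v$ either lies in $S$ or has two neighbors in an earlier level. Unfolding this recursion and listing the vertices in nondecreasing order of their level index (breaking ties arbitrarily, with the elements of $S$ first) yields a sequence in which every vertex is either in $S$ or is preceded by two of its neighbors — exactly a 2-path from some pair $s_1,s_2\in S$ to $x$. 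One has to check that the first two entries can be taken in $S$: since $x$ is reached, the recursion bottoms out at vertices of $S$, and we can prune the list to start with two such vertices (if the whole hull is reached from a single vertex of $S$ this cannot happen, but then $x$ had two neighbors already in $S_{i-1}$ which itself traces back, so at least two distinct elements of $S$ are used, unless $S^\ast=S$, in which case $x\in S$ and there is nothing to prove).

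The main obstacle is purely bookkeeping: organizing the recursion tree of ``witnesses'' (each non-$S$ vertex points to two earlier vertices) into a single linear sequence satisfying the 2-path condition, and making sure the sequence genuinely begins with two vertices $s_1,s_2$ of $S$ rather than getting stuck with only one. This is handled by a topological-sort argument on the acyclic ``dependency'' relation $y\mapsto\{u,v\}$ restricted to the ancestors of $x$, ordering by level and placing $S$-vertices first; acyclicity follows because each witness points strictly backward in level. I do not expect the permutation-graph structure to be needed for this lemma at all — it is a general fact about $P_3$-convexity — so no geometric reasoning about the diagram enters here.
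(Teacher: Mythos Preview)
Your proposal is correct and follows essentially the same route as the paper: both characterize $\sigma(S)$ as the union of the increasing chain $S=S_0\subseteq S_1\subseteq\cdots$ obtained by repeatedly adding vertices with two neighbors inside (the paper writes this via Duchet's interval function as $I^k(S)$), and then read off the equivalence with 2-paths from this layered description. You spell out the bookkeeping---the topological sort of the witness DAG and the check that at least two $S$-vertices appear first---that the paper leaves implicit, and you are right that no permutation-graph structure is needed here.
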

\begin{proof}
Following Duchet, let $I(x,y)$ be the `interval function' of 
the $P_3$-convexity, that is, for two vertices 
$x$ and $y$, $I(x,y)$ is $\{x,y\}$ plus 
the set of vertices that are adjacent to 
both $x$ and $y$.\footnote{Duchet proved that, for interval convexities,  
the Carath\'eodory number is the smallest integer $k \in \mathbb{N}$ such that every 
$(k+1)$-set is redundant. Thus, the fixed-parameter Carath\'eodory 
number is polynomial.} 
For a set $S$, we let 
\begin{equation}
I^0(S)=S \quad\text{and}\quad I^{k+1}(S)=I(I^k(S) \times I^k(S)).
\end{equation}
Then, 
\begin{equation}
\sigma(S)=\bigcup_{k \in \mathbb{N} \cup \{0\}} \; I^k(S).
\end{equation}
In other words, a vertex is in $I^{k+1}(S)$ if 
it is in $I^k(S)$, or else it has two neighbors in $I^k(S)$. This is expressed 
by the existence of a 2-path. 

\medskip 

\noindent
This proves the lemma.
\qed\end{proof}

\bigskip 

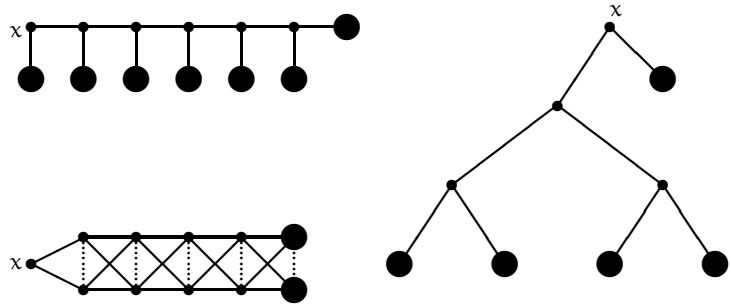
\begin{figure}
\begin{center}
\setlength{\unitlength}{.7mm}
\thicklines
\begin{picture}(130,70)
\put(10,15){\circle*{2}}
\put(6,14){$x$}
\put(20,10){\circle*{2}}
\put(20,20){\circle*{2}}
\put(30,10){\circle*{2}}
\put(30,20){\circle*{2}}
\put(40,10){\circle*{2}}
\put(40,20){\circle*{2}}
\put(50,10){\circle*{2}}
\put(50,20){\circle*{2}}
\put(60,10){\circle*{5}}
\multiput(60,10)(0,1){10}{\circle*{.5}}
\put(60,20){\circle*{5}}
\put(10,15){\line(2,-1){10}}
\put(10,15){\line(2,1){10}}
\multiput(20,10)(0,1){10}{\circle*{.5}}
\put(20,10){\line(1,1){10}}
\put(20,10){\line(1,0){10}}
\put(20,20){\line(1,-1){10}}
\put(20,20){\line(1,0){10}}
\put(30,10){\line(1,0){10}}
\multiput(30,10)(0,1){10}{\circle*{.5}}
\put(30,10){\line(1,1){10}}
\put(30,20){\line(1,0){10}}
\put(30,20){\line(1,-1){10}}
\put(40,10){\line(1,0){10}}
\multiput(40,10)(0,1){10}{\circle*{.5}}
\put(40,10){\line(1,1){10}}
\put(40,20){\line(1,0){10}}
\put(40,20){\line(1,-1){10}}
\put(50,10){\line(1,0){10}}
\multiput(50,10)(0,1){10}{\circle*{.5}}
\put(50,10){\line(1,1){10}}
\put(50,20){\line(1,0){10}}
\put(50,20){\line(1,-1){10}}

\put(10,50){\circle*{5}}
\put(10,60){\circle*{2}}
\put(6,58){$x$}
\put(20,50){\circle*{5}}
\put(20,60){\circle*{2}}
\put(30,50){\circle*{5}}
\put(30,60){\circle*{2}}
\put(40,50){\circle*{5}}
\put(40,60){\circle*{2}}
\put(50,50){\circle*{5}}
\put(50,60){\circle*{2}}
\put(60,50){\circle*{5}}
\put(60,60){\circle*{2}}
\put(70,60){\circle*{5}}
\put(10,60){\line(1,0){60}}
\put(10,60){\line(0,-1){10}}
\put(20,60){\line(0,-1){10}}
\put(30,60){\line(0,-1){10}}
\put(40,60){\line(0,-1){10}}
\put(50,60){\line(0,-1){10}}
\put(60,60){\line(0,-1){10}}

\put(120,60){\circle*{2}}
\put(130,50){\circle*{5}}
\put(120,60){\line(-2,-3){10}}
\put(120,60){\line(1,-1){10}}
\put(120,62){$x$}
\put(80,15){\circle*{5}}
\put(90,30){\circle*{2}}
\put(100,15){\circle*{5}}
\put(110,45){\circle*{2}}
\put(120,15){\circle*{5}}
\put(130,30){\circle*{2}}
\put(140,15){\circle*{5}}
\put(110,45){\line(-4,-3){20}}
\put(110,45){\line(4,-3){20}}
\put(90,30){\line(-2,-3){10}}
\put(90,30){\line(2,-3){10}}
\put(130,30){\line(-2,-3){10}}
\put(130,30){\line(2,-3){10}}

\end{picture}
\end{center}
\caption{The figure shows three examples of 2-paths from a vertex $x$ to a set $S$. 
The heavy dots 
represent vertices of $S$. Notice, however, that the binary tree is not a permutation 
graph (since it has an asteroidal triple). The second example is a simple path in 
which each vertex, except $x$, is replaced by a twin. Permutation graphs are closed 
under creating twins, so, since paths 
are permutation graphs, this second example is so also.}
\end{figure}

\begin{lemma}
\label{lm linear order}
Each component of $G[S]$, ie, the subgraph induced by $S$, 
is a single vertex or an edge. 
In a permutation diagram for $G$, there is a linear left-to-right ordering 
of the components of $G[S]$. 
\end{lemma}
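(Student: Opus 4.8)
The plan is to treat the two assertions separately; the first is immediate, and the second is where the geometry of the permutation diagram does the work.

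For the first assertion, I would show that $G[S]$ has maximum degree at most one. Indeed, if some $v\in S$ had two neighbours $y,z\in S$, then $v$ is adjacent to both vertices of $\{y,z\}\subseteq S-v$, so $v\in I(y,z)\subseteq\sigma(S-v)$ (equivalently, $[y,z,v]$ is a $2$-path from $v$ to $S-v$), contradicting convex independence. A graph of maximum degree one is a disjoint union of isolated vertices and isolated edges, so each component of $G[S]$ is a single vertex or an edge.

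For the second assertion, fix a permutation diagram, which we may assume is in general position, so that the top endpoints of the segments are pairwise distinct and likewise the bottom endpoints; write $t(v)$ and $b(v)$ for the two endpoints of the segment of $v$. Two vertices are adjacent exactly when their segments cross, that is, when $(t(u)-t(v))(b(u)-b(v))<0$; when $u$ and $v$ are non-adjacent, exactly one of $t(u)<t(v)$ with $b(u)<b(v)$ (write $u\ll v$) or its reverse holds. Thus $\ll$ is a strict partial order on $V$ whose comparable pairs are precisely the non-adjacent pairs. Since distinct components of $G[S]$ contain no adjacent pair, every vertex of one is $\ll$-comparable to every vertex of another. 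The crucial point is that an edge of $G[S]$ cannot be straddled: if $\{s,s'\}$ is an edge of $G[S]$ and $r\in S$ lies in another component, one cannot have $s\ll r\ll s'$, for that would give $t(s)<t(s')$ and $b(s)<b(s')$, contradicting that $s$ and $s'$ cross; symmetrically $r$ cannot be straddled by an edge of another component. Hence all vertices of one component lie on the same $\ll$-side of all vertices of another component.

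It follows that setting $K\prec K'$ whenever $s\ll s'$ for some --- equivalently, by the previous paragraph, every --- $s\in K$ and $s'\in K'$ defines a total order on the components of $G[S]$: totality holds because cross-component pairs are $\ll$-comparable, while antisymmetry and transitivity are inherited from $\ll$ on $V$ by choosing representatives. Listing the components in $\prec$-order is the required left-to-right ordering. The only delicate step is the well-definedness of $\prec$, but it rests entirely on the one-line ``no straddling'' observation and its mirror image, so I do not expect any real difficulty there.
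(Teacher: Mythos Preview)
Your proof is correct and follows the same approach as the paper's: forbid a vertex of $S$ from having two neighbours in $S$ (the paper phrases this as ``$G[S]$ cannot contain $K_3$ or $P_3$''), then use the geometry of the diagram to linearly order the components. The only difference is that the paper compresses your second assertion to the one-line observation that the line segments of $S$ form a permutation diagram for $G[S]$ whose connected pieces are naturally ordered left to right, whereas you make the underlying partial order $\ll$ and the no-straddling argument explicit; your version is more detailed but not a different idea.
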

\begin{proof}
Since $S$ is convexly independent, $G[S]$ cannot contain $K_3$ or $P_3$. 
Thus each component of $G[S]$ is an edge or a vertex. 

\medskip 

\noindent 
Fix a permutation diagram for $G$. The line segments that correspond to 
the vertices of $S$ form a permutation diagram for $G[S]$. Each component 
of $G[S]$ is a connected part of the diagram, and the left-to-right 
ordering of the connected parts in the diagram yields a total 
ordering of the components of $G[S]$. 
\qed\end{proof}

\begin{definition}
The last component of $S$ is the rightmost component in the linear 
ordering as specified in Lemma~\ref{lm linear order}. 
\end{definition}

We say that a vertex $\notin S$ is to the right of the last component if 
its line segment appears to the right of the last component, ie, the endpoints 
of the line segment, on the top line and 
bottom line of the diagram, appear to the right of the endpoints of the last component. 

\begin{definition}
The \underline{border} of 
$\sigma(S)$ is the set of the two rightmost endpoints, on the top line 
and bottom line of the permutation diagram, that are endpoints of line segments 
corresponding to vertices of $\sigma(S)$. 
\end{definition}

We say that a line segment is to the left of the border if both its endpoints 
are left of the appertaining endpoints that constitute the border. 

\begin{lemma}
\label{lm border in S}
If the elements of the border of $\sigma(S)$ are the endpoints of a single line 
segment, then this is the line segment of a vertex in $S$. 
\end{lemma}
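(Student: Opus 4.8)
The plan is to argue by contradiction. Let $v$ be the vertex whose line segment has both endpoints on the border of $\sigma(S)$, so $v\in\sigma(S)$, and suppose $v\notin S$. By the 2-path lemma, $v\in\sigma(S)$ means there is a 2-path $[s_1,s_2,x_1,\dots,v]$ connecting $v$ to $S$; since $v\notin S$, the vertex $v$ has two distinct predecessors in this path, and both of them lie in $\sigma(S)$. Hence it suffices to derive a contradiction by showing that $v$ has \emph{no} neighbour at all in $\sigma(S)$.

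To do this I would work directly in the fixed permutation diagram. Coordinatize each line segment by the horizontal position $t(w)$ of its top endpoint and $b(w)$ of its bottom endpoint, taking the endpoints in general position so that these positions are pairwise distinct. Recall that two vertices $w,w'$ are adjacent in $G$ exactly when their segments cross, i.e. when the order of $t(w),t(w')$ is opposite to the order of $b(w),b(w')$. By the definition of the border of $\sigma(S)$, the segment of $v$ realizes simultaneously the rightmost top endpoint and the rightmost bottom endpoint among all segments of vertices in $\sigma(S)$. Therefore, for every $w\in\sigma(S)$ with $w\neq v$ we have $t(w)<t(v)$ and $b(w)<b(v)$; the two orders agree, so the segments of $w$ and $v$ do not cross, and $w$ is not adjacent to $v$.

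Thus $v$ has no neighbour in $\sigma(S)$, which is incompatible with the 2-path forcing two neighbours of $v$ into $\sigma(S)$. Hence the assumption $v\notin S$ is untenable, and $v\in S$, as claimed. I do not anticipate a real obstacle in this argument; the only points needing care are the general-position convention on the endpoints (so that ``rightmost'' is well defined and the inequalities are strict) and the clean translation of adjacency in a permutation graph into the crossing condition, after which the conclusion is immediate.

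\qed
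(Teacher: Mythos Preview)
Your proposal is correct and follows essentially the same approach as the paper: assume the border vertex is not in $S$, invoke the 2-path lemma to obtain two neighbours of it inside $\sigma(S)$, and then observe that any such neighbour's segment must cross the border segment and therefore have an endpoint strictly to the right of one of the border's endpoints, contradicting the definition of the border. The only cosmetic difference is that you phrase the punchline as ``$v$ has no neighbour in $\sigma(S)$'' while the paper phrases it as ``some endpoint of $\sigma(S)$ lies to the right of $v$''; these are contrapositives of one another.
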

\begin{proof}
Let $x$ be the vertex whose line segment has endpoints that form the border of 
$\sigma(S)$. Assume that $x \notin S$. Then, by definition, there is a 2-path $\Delta$ 
from $x$ to $S$ and all vertices of the 2-path are in $\sigma(S)$. Since $x \notin S$, 
it has two neighbors that appear earlier in $\Delta$. The line segments of the 
two neighbors are crossing the line segment of $x$, and so there must be 
endpoints of $\sigma(S)$ that appear to the right of the endpoints of $x$. This is a 
contradiction. 
\qed\end{proof}

\begin{lemma}
\label{lm vertex disjoint paths}
For every vertex in $\sigma(S) \setminus S$ there exist two vertex-disjoint 
paths to $S$ with all vertices in $\sigma(S)$. 
\end{lemma}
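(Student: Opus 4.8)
The plan is to recast the statement as a vertex-connectivity assertion inside the induced subgraph $H := G[\sigma(S)]$ and then invoke Menger's theorem in its fan form. Saying that $x$ is joined to $S$ by two paths inside $\sigma(S)$ that are vertex-disjoint apart from $x$ (and meet $S$ only in their endpoints) is precisely saying that no single vertex $v\neq x$ of $H$ separates $x$ from $S$ in $H$; by the fan lemma it is therefore enough to prove that no vertex $v\in\sigma(S)\setminus\{x\}$ meets every $x$--$S$ path of $H$. For the bookkeeping I would reuse the iterated interval operator $I^k$ from the proof of the previous lemma, with $\sigma(S)=\bigcup_k I^k(S)$, and for $y\in\sigma(S)$ set $\operatorname{lev}(y)$ to be the least $k$ with $y\in I^k(S)$; thus $\operatorname{lev}(y)=0$ exactly when $y\in S$, and if $\operatorname{lev}(y)\ge 1$ then $y$ has \emph{two distinct} neighbours in $\sigma(S)$, both of strictly smaller level.

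The heart of the argument is a minimum-level trick. Suppose such a separating vertex $v$ existed, and let $C$ be the component of $H-v$ containing $x$. Since no $x$--$S$ path of $H$ avoids $v$, no vertex of $S$ lies in $C$, so $S\cap C=\varnothing$. Pick $w\in C$ with $\operatorname{lev}(w)$ minimum. Then $w\notin S$, so $\operatorname{lev}(w)\ge 1$ and $w$ has two distinct neighbours $p\neq q$ in $\sigma(S)$ of level $<\operatorname{lev}(w)$. By minimality neither $p$ nor $q$ lies in $C$; but $w\in C$ and $C$ is a component of $H-v$, so the only vertex of $H$ outside $C$ that can be adjacent to $w$ is $v$ itself. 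Hence $p=v=q$, contradicting $p\neq q$. This rules out $v$, and the lemma follows. (The same minimum-level observation, applied to the component of $H$ containing $x$ with nothing deleted, shows that that component meets $S$, so $x$ really is connected to $S$ inside $\sigma(S)$ and the minimum $x$--$S$ separator has size at least one; having excluded size one we get at least two disjoint paths.)

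It is worth noting that this proof uses nothing about permutation diagrams --- it holds in every graph --- which is a useful sanity check. The only delicate point, and the one I would be careful to state explicitly, is that a vertex of positive level has at least two \emph{distinct} neighbours of smaller level; this is exactly the content of the ``two neighbours in $I^k(S)$'' formulation of the $P_3$-interval operator, and it is what makes the minimum-level vertex of $C$ collapse both of its low-level neighbours onto the single cut vertex. A Menger-free alternative would induct on $\operatorname{lev}(x)$ and build the two paths by prepending $x$ to paths emanating from its two lower-level neighbours, but keeping those two paths disjoint needs an extra re-routing argument when they intersect, so the connectivity formulation above is the cleaner route.
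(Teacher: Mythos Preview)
Your argument is correct and is essentially the paper's own proof: both reduce the claim to ruling out a single cutvertex between $x$ and $S$ inside $G[\sigma(S)]$, and both derive the contradiction by picking a minimum-level (equivalently, shortest-$2$-path) vertex in the component cut off from $S$ and noting that its two distinct lower-level neighbours would have to collapse onto the one cutvertex. The only cosmetic difference is that the paper first adds edges to make $S$ a clique (so ordinary $2$-connectivity replaces your explicit appeal to the fan form of Menger) and speaks of a ``shortest $2$-path'' where you use the level function.
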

\begin{proof}
We may assume that $|S| > 1$, otherwise $\sigma(S)=S$ and the claim is void. 
For convenience, add edges to the graph such that $S$ becomes a clique and 
remove the 
vertices that are not in $\sigma(S)$. 
Assume that some vertex of $\sigma(S) \setminus S$ 
is separated from $S$ by a cutvertex $c$. 
Since $S$ is a clique, $S-c$ is contained in one component $C_1$ of $\sigma(S)-c$ 
and some vertices of $\sigma(S) \setminus S$ are in some other component $C_2$. 
Consider all 2-paths from vertices in $C_2$ to $S$.  Let $x$ be a vertex 
that is in $C_2$ with a shortest 2-path to $S$.  
Then $x$ must have two neighbors that appear earlier in the 2-path. 
But that is impossible, 
since there is only one candidate, namely $c$. 
\qed\end{proof}

\begin{lemma}
\label{lm decide sigma}
Let the line segment of a vertex $x$ be to the right of the last component 
of $S$. Then $x \in \sigma(S)$ if and only if $x$'s line segment 
is to the left of the border. 
\end{lemma}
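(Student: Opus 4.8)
The plan is to prove the two implications of the biconditional separately, in each case using the 2-path description of $\sigma(S)$ together with the elementary fact that $\sigma(S)$, being convex, contains every vertex that has two neighbours in it. Throughout, recall that a vertex $x$ whose segment lies to the right of the last component lies above every vertex of $S$ in the comparability order of the diagram, so $x$ is adjacent to no vertex of $S$ and in particular $x \notin S$.

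\emph{($\Rightarrow$).} If $x \in \sigma(S)$ then, by the very definition of the border, the top endpoint of $x$ is not to the right of the top border endpoint and the bottom endpoint of $x$ is not to the right of the bottom border endpoint. To upgrade this to the strict statement I would argue as in the proof of Lemma~\ref{lm border in S}: since $x \notin S$, Lemma~\ref{lm vertex disjoint paths} gives two neighbours of $x$ inside $\sigma(S)$, each of whose segments crosses that of $x$ and hence has an endpoint strictly to the right of the corresponding endpoint of $x$; following the two vertex-disjoint paths back to the two distinct vertices of $S$ in which they terminate then, after a short case analysis, yields that the overhangs occur on both lines, so both endpoints of $x$ are strictly left of the border. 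This direction is the easier of the two.

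\emph{($\Leftarrow$).} Here it suffices to exhibit two neighbours of $x$ lying in $\sigma(S)$. Let $u$ and $w$ be vertices of $\sigma(S)$ carrying, respectively, the top and the bottom endpoint of the border. If $u = w$, then Lemma~\ref{lm border in S} puts this vertex in $S$, and a short argument shows it must then be a vertex of the last component; but in that case no vertex lying to the right of the last component can also lie strictly left of the border, so there is nothing to prove. Hence assume $u \neq w$; then $u$ carries the unique rightmost top endpoint and $w$ the unique rightmost bottom endpoint of $\sigma(S)$, so the segments of $u$ and $w$ cross each other. Since $x$ is strictly left of the border, its top endpoint lies left of that of $u$ and its bottom endpoint left of that of $w$; when moreover the top endpoint of $x$ lies to the right of that of $w$ and the bottom endpoint of $x$ lies to the right of that of $u$, the segment of $x$ crosses both $u$ and $w$, which are then the two required neighbours.

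The remaining case of $(\Leftarrow)$ is that $x$ lies ``below'' the crossing point of $u$ and $w$, and I would settle it by induction --- say on the number of vertices of $\sigma(S)$ to the right of the last component whose top endpoint precedes that of $x$ --- replacing the pair $(u,w)$ by a pair of vertices of $\sigma(S)$ that are themselves to the right of the last component and still straddle $x$, obtained from the 2-paths witnessing $u, w \in \sigma(S)$. I expect this inductive step to be the main obstacle: making it work amounts to showing that the part of $\sigma(S)$ lying to the right of the last component is free of gaps up to the border, which is exactly the structural fact underlying the correctness of the left-to-right sweep in the algorithm.
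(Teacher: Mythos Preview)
Your forward direction is over-engineered: by the definition of the border, any $x\in\sigma(S)$ has both endpoints no further right than the corresponding border endpoints, and that is all that is needed. The strictness you worry about is a non-issue once you note that $x$, lying to the right of the last component, cannot itself carry a border endpoint unless it coincides with one of the border vertices. You do not need Lemma~\ref{lm vertex disjoint paths} here at all.

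The genuine gap is in your backward direction. You correctly reduce to the case where $x$ is non-adjacent to one of the two border vertices, say $a$, but then you propose an induction that you yourself flag as ``the main obstacle'' and never carry out. The paper avoids this induction entirely, and it does so with exactly the lemma you misplaced: apply Lemma~\ref{lm vertex disjoint paths} to the border vertex $a$, not to $x$. Since $a\in\sigma(S)\setminus S$, there are two vertex-disjoint paths inside $\sigma(S)$ from $a$ to $S$. Now $x\notin N(a)$ together with $x$ being left of the border forces the segment of $x$ to lie entirely to the left of that of $a$, while $x$ also lies entirely to the right of the last component $L$; hence in the diagram the segment of $x$ separates $a$ from every vertex of $S$, so any path from $a$ to $S$ must contain a vertex whose segment crosses that of $x$. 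The two vertex-disjoint paths therefore contribute two distinct neighbours of $x$ in $\sigma(S)$, giving $x\in\sigma(S)$ in a single step with no recursion. Your inductive plan could perhaps be made to work, but it is both harder and unnecessary.
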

\begin{proof}
By Lemma~\ref{lm border in S}, 
we may assume that the border corresponds to two adjacent 
vertices $a$ and $b$. By definition of the border, $x \in \sigma(S)$ 
implies that both of $x$'s endpoints are left of the border elements. 
If $x$ is adjacent to both $a$ and $b$ then $x$ has two neighbors in 
$\sigma(S)$, and so, $x$ itself is in $\sigma(S)$. 
Assume that $x \notin N(a)$. Since $a \in \sigma(S)\setminus S$, 
by Lemma~\ref{lm vertex disjoint paths}, there are two vertex-disjoint 
paths from $a$ to $S$. 
Since the line segment of $x$ is between $a$ and the last component of $S$, 
each path must contain a neighbor of $x$. This implies that $x \in \sigma(S)$. 
\qed\end{proof}

\bigskip 

Our algorithm performs a dynamic programming on feasible last components 
of $S$ and the border of $\sigma(S)$.  
By Lemma~\ref{lm decide sigma}, the last element of $S$ and the border 
supply sufficient information to decide whether a `new last component,'  to the 
right of the previous last component, has a vertex in $\sigma(S)$ or not.   

\bigskip 

Let $S^{\ast} = S \cup X$, where $X$ is either a single vertex or an edge, and 
assume that $X$ has no vertex $x \in X \cap \sigma(S^{\ast}-x)$. 

\begin{remark}
Notice that, if $|X|=1$, then it may be adjacent to one vertex appertaining the border. 
(For an example, see the figure below.) 
When $|X|=2$, both vertices of $X$ must be to the right of the border, otherwise, 
one element of $X$ is adjacent to an element of $\sigma(S)$ and to 
the other element 
of $X$, which would make $S^{\ast}$ convexly dependent. In any case, given the border, 
it is easy to check, algorithmically, 
the feasibility of a new component $X$. 
\end{remark}

To guarantee that $S^{\ast}$ is a 
convex-independent set (with last component $X$), we need to check that, for any 
$s \in S$, $s \notin \sigma(S^{\ast}-s)$.  The figure below shows that the last 
component of $S$ and the border of $\sigma(S)$ do not convey sufficient 
information to guarantee that $S^{\ast}$ is convexly independent. 

\begin{figure}
\begin{center}
\setlength{\unitlength}{1.1mm}
\begin{picture}(110,40)
\put(10,10){\circle*{1}}
\put(20,10){\circle*{1}}
\put(30,10){\circle*{1}}
\put(40,10){\circle*{1}}
\put(50,10){\circle*{1}}
\put(10,30){\circle*{1}}
\put(20,30){\circle*{1}}
\put(30,30){\circle*{1}}
\put(40,30){\circle*{1}}
\put(50,30){\circle*{1}}
\put(5,10){\line(1,0){50}}
\put(5,30){\line(1,0){50}}
\thicklines
\put(10,10){\line(1,2){10}}
\put(10,30){\line(1,-1){20}}
\put(30,30){\line(1,-2){10}}
\put(40,30){\line(1,-2){10}}
\thinlines
\put(20,10){\line(3,2){30}}

\put(9,8){$s_2$}
\put(19,8){$y$}
\put(29,8){$s$}
\put(39,8){$s_1$}
\put(49,8){$x$}
\put(9,31){$s$}
\put(19,31){$s_2$}
\put(29,31){$s_1$}
\put(39,31){$x$}
\put(49,31){$y$}

\put(80,15){\circle*{2}}
\put(80,25){\circle*{2}}
\put(90,15){\circle*{2}}
\put(90,25){\circle*{2}}
\put(100,15){\circle*{2}}
\put(80,15){\line(1,0){20}}
\put(80,15){\line(0,1){10}}
\put(90,15){\line(0,1){10}}
\put(77,14){$s$}
\put(89,12){$y$}
\put(102,14){$x$}
\put(79,27){$s_2$}
\put(89,27){$s_1$}

\end{picture}
\end{center}
\caption{The figure shows $s\in\sigma(S^{\ast}-s)$, where $S=\{s,s_1,s_2\}$,  
$S^{\ast}=\{s,s_1,s_2,x\}$, $\sigma(S)=\{s,s_1,s_2,y\}$ and 
$\sigma(S^{\ast}-s)=\{s_1,s_2,x,y_2,s\}$. Notice that $x \notin \sigma(S)$ ($x$ is 
not left of the border; the border of $\sigma(S)$ has the endpoint of $y$ on the 
top line and the endpoint of $s_1$ on the bottom line).}
\end{figure}
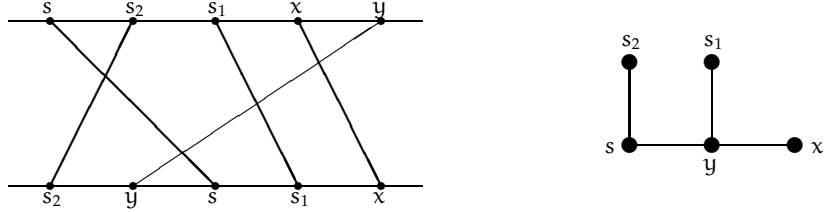

\bigskip 

In the following, let $L$ be the last component of $S$, let $X$ be a 
feasible, `new,' last component of $G[S^{\ast}]$, to the right of $L$, 
where $S^{\ast}=S \cup X$. 
(Of course, both $L$ and $X$ are either single vertices or edges, and no vertex 
of $L$ is adjacent to any vertex of $X$.) 
The feasibility of $X$ is defined so that: 
\begin{equation}
\label{eqnx}
\forall_{x \in X}\; x \notin \sigma(S^{\ast}-x).
\end{equation}
In our final theorem, below, we prove that it is sufficient to maintain 
a constant amount of information, to enable the algorithm 
to check the convexly independence of $S^{\ast}$. 

\bigskip 

First we define a partial 2-path.  We consider two cases, namely where 
$|X|=1$ and where $|X|=2$. To define it, we `simulate' $X$ by an auxiliary 
vertex, or an auxiliary true twin, that we place immediately to the right of $L$.  

\begin{definition}
When $|X|=1$, add one vertex $s^{\prime}$ with a line segment whose endpoints 
are immediately to the right of the rightmost endpoint of $L$ on the top 
line and the rightmost endpoint of $L$ on the bottom line. When $|X|=2$, then 
replace the vertex $s^{\prime}$ above by a true twin $s_1^{\prime}$ and $s_2^{\prime}$. 
Let $X^{\prime}=\{s^{\prime}\}$ when $|X|=1$ and $X^{\prime}=\{s_1^{\prime},s_2^{\prime}\}$ 
when $|X|=2$. Finally, let $S^{\prime}=S \cup X^{\prime}$. 
A partial 2-path from $u \in S$ to $S^{\ast}-u$ is a 2-path from $u$ to 
$S^{\prime}-u$, from which $S^{\prime}$ is removed. 
\end{definition}

\begin{theorem}
There exists a polynomial-time algorithm to compute a convexly independent 
set of maximal cardinality in permutation graphs. 
\end{theorem}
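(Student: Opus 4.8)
The plan is to set up a left-to-right dynamic program over the permutation diagram, whose states are the feasible last components $L$ of $S$ together with the border of $\sigma(S)$, augmented by exactly the constant amount of extra bookkeeping needed to certify convex independence when a new component $X$ is appended. By Lemma~\ref{lm decide sigma}, once we know $L$ and the border, we can decide for any line segment to the right of $L$ whether it lies in $\sigma(S)$; and by the Remark following Lemma~\ref{lm decide sigma}, feasibility of $X$ (i.e.\ condition~\eqref{eqnx}) is checkable from the border alone. So the only genuine difficulty is condition (2) of convex independence for the \emph{old} vertices: for every $s \in S$ we must have $s \notin \sigma(S^{\ast}-s)$, and the figure above shows $L$ and the border do not suffice for this.

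The key observation, which the partial-2-path definition is engineered to capture, is that whether $s \in \sigma(S^{\ast}-s)$ depends on $S^{\ast}$ only through the position of $X$ relative to $L$, not through the rest of $X$'s structure --- this is why we may replace $X$ by the auxiliary vertex (or true twin) $s'$ placed immediately to the right of $L$, obtaining $S' = S \cup X'$. The claim I would prove is: for $s \in S$, $s \in \sigma(S^{\ast}-s)$ if and only if there is a 2-path from $s$ to $S'-s$, equivalently a partial 2-path from $s$ to $S^{\ast}-s$ (a 2-path to $S'-s$ with the vertices of $S'$ deleted). One direction is immediate from the 2-path characterization (Lemma) since $X'$ sits to the left of any feasible $X$ in the diagram, so a 2-path using $X'$ yields one using $X$; the reverse direction uses Lemma~\ref{lm vertex disjoint paths} (two vertex-disjoint paths from any vertex of $\sigma \setminus S$ to the ``clique'' $S^{\ast}$) together with the planarity of the permutation diagram to argue that any such path, entering the region to the left of $L$, must cross the auxiliary segment's position, so it can be rerouted through $s'$ (or through $s_1', s_2'$, which is where we need a \emph{true twin} rather than a single vertex, to supply two neighbors). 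The crossing/rerouting argument is the main obstacle: one must show that a 2-path witnessing $s \in \sigma(S^{\ast}-s)$ can always be taken so that every vertex it uses outside $\sigma(S)$ lies between $L$ and $X$ in the diagram, which is a geometric claim about which endpoints must appear to the right of which.

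Granting this claim, the dynamic-programming state is: the last component $L$, the border $(a,b)$ of $\sigma(S)$, and --- for each of the (constantly many) vertices of $L$, and more generally for each vertex $s$ of $S$ that could still be threatened by a future component --- a single bit recording whether a partial 2-path from $s$ to $S'-s$ exists. Since the partial 2-path, once $X$ is fixed, only ever needs to reach back to $S'-s$ through segments lying strictly between $L$ and $X$, and the existence of such a path is determined by $L$, the border, and the position of $X$, the number of $s$'s that can be threatened by a component placed to the right of $L$ is bounded (they must be adjacent into the region just right of $L$); so the state is of constant size and there are polynomially many states. The transition, given a candidate new last component $X$ to the right of $L$: (i) check feasibility of $X$ via the border; (ii) compute the new border by testing, left to right, which segments up to $X$ fall in $\sigma(S^{\ast})$ using Lemma~\ref{lm decide sigma}; (iii) verify, via the recorded bits and the position of $X$, that no $s \in S$ gains a 2-path to $S^{\ast}-s$, and update the bits for the vertices of $X$ and for any surviving threatened vertices; (iv) add $|X| \in \{1,2\}$ to the running count. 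Initializing with the first component and taking the maximum over all reachable states gives $\beta_c(G)$; because $\beta_c$ of a disconnected graph is the sum over components, and a convexly independent set may split across components of $G$ (not just of $G[S]$), one finally runs the DP on each connected component of $G$ and sums, which is routine. All steps run in polynomial time, proving the theorem.

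\qed
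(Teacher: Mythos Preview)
Your overall architecture---a left-to-right dynamic program on the last component $L$ and the border, with extra bookkeeping to certify that no old $s\in S$ falls into $\sigma(S^{\ast}-s)$---is the same as the paper's. But the bookkeeping you propose does not work, and the fix is exactly the structural observation the paper's proof hinges on and that your write-up never reaches.

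First, your equivalence claim is false in the direction you call ``immediate''. You assert that a 2-path from $s$ to $S'-s$ (using the auxiliary $X'$ placed immediately right of $L$) yields a 2-path from $s$ to $S^{\ast}-s$ (using the real $X$), because ``$X'$ sits to the left of any feasible $X$''. That is backwards: $X'$ is adjacent to \emph{every} vertex of $N(L)$ that crosses to the right of $L$, whereas the actual $X$ may be far to the right and adjacent to none of them. So the existence of a partial 2-path tells you only that \emph{some} $X$ could threaten $s$, not that \emph{this} $X$ does. Consequently a single bit ``does a partial 2-path exist?'' is not enough; you must know \emph{where} the partial 2-path exits through $N(L)$ in order to test whether it extends to the given $X$.

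Second, your bound ``only constantly many $s\in S$ can be threatened, since they must be adjacent into the region just right of $L$'' is wrong. A threatened $s$ need not be adjacent to anything near $L$; the 2-path witnessing $s\in\sigma(S^{\ast}-s)$ starts at vertices of $S^{\ast}-s$ (including $X$) and may be arbitrarily long before it reaches $s$. Think of a ladder: the leftmost rung lies in the convex hull of two vertices arbitrarily far to the right. So ``one bit per threatened $s$'' is potentially $|S|$ bits, and your state is not polynomial.

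The paper sidesteps both issues with one observation you are missing. It does not track \emph{which} $s$ is threatened; it tracks the \emph{exit points} of partial 2-paths in $N(L)$. Split $N(L)$ into the vertices whose right endpoint lies on the top line and those whose right endpoint lies on the bottom line; within each part, neighborhoods into the component of $G-N[L]$ containing $X$ are totally ordered by inclusion (the farther right the endpoint, the larger the neighborhood). Hence it suffices to store, over all $u\in S$ and all partial 2-paths from $u$, the pair $y_1,y_2\in N(L)$ with maximal right-reach (and, for 2-paths that exit through a single $y\in N(L)$, the single maximal $y$). That is a constant number of vertices. To test convex independence of $S^{\ast}$, one then checks whether these stored exit points extend to $X$---which is exactly the piece your bits cannot supply. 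With this replacement your DP goes through; without it, it does not.
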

\begin{proof}
Consider a vertex $u \in S$ for which $u \in \sigma(S^{\ast}-u)$. We may assume 
that $u \notin L$, because $L$ is available to the algorithm and so, 
it is easy to check 
the condition for elements of $L$. 
Then there is a 2-path $\Delta=[s_1,s_2,\dots,u]$ from $u$ to $S^{\ast}-u$. 
If no vertex of $X$ is in this 2-path, then $u \in \sigma(S-u)$, which 
contradicts our assumption that $S$ is convexly independent. 
We may assume that at least one of $s_1$ and $s_2$ 
is an element of $X$. 

\medskip 

\noindent
Since $\Delta$ contains two vertex-disjoint paths from $u$ to $S^{\ast}-u$, 
at least one of these paths must contain some vertex of $N(L)$. 
Partition the vertices of $N(L)$ in two parts. One part contains 
those vertices that have their endpoint on the top line to the right of 
$L$, and the other part contains those vertices that have their endpoint on the 
bottom line to the right of $L$. We claim that both parts are totally 
ordered by set-inclusion of 
their neighborhoods in the component of $G-N[L]$ that contains $X$. 
To see that, consider two elements $a$ and $b$ of $N(L)$. Say that $a$ and 
$b$ both have an endpoint on the top line, to the right of $L$. If 
that endpoint of $a$ is to the left of the endpoint of $b$, then every neighbor 
of $a$ in the component of $G-N[L]$ that contains $X$ is also a neighbor of $b$. 

\medskip 

\noindent
We store subsets with two vertices, $y_1$ and $y_2$ in $N(L)$, 
for which there is a 
partial 2-path from some vertex $u \in S$ to $y_1$ and $y_2$. It is sufficient 
to store only those two vertices 
$y_1$ and $y_2$ that they have a maximal neighborhood. 
In other words, we choose 
$y_1$ and $y_2$ such that their endpoints on the top line and 
bottom line are furthest to the right, or, if they are 
both in the same part, the two that have a maximal neighborhood. 

\medskip 

\noindent 
One other possibility is, that a 
2-path from $u$ to $\sigma(S^{\ast}-u)$ has only one vertex $y \in N(L)$ 
on a path from $u$ to $X$.   Of those 2-paths, 
We also store the element $y$, with a largest 
neighborhood in the component of $G-N[L]$ that contains $X$. 

\medskip 

\noindent 
To check if $S^{\ast}$ is convexly independent it is now sufficient to check if 
one of the partial paths to $y_1$ and $y_2$, or to the 
single element $y$, extend to $X$. 

\medskip 

\noindent 
This proves that there is a dynamic programming algorithm to compute 
a maximum convexly independent set. 
\qed\end{proof}

\end{document}